\newtheorem{theorem}{Theorem}[section]
\newtheorem{proposition}[theorem]{Proposition}
\newtheorem*{remark}{Remark}
\newtheorem{corollary}[theorem]{Corollary}
\newtheorem{definition}{Definition}
\newtheorem*{theorem*}{Theorem}
\def\ZZ{\mathbb{Z}} 
\def\RR{\mathbb{R}} 
\def\CC{\mathbb{C}} 
\def\NN{\mathbb{N}} 
\def\QQ{\mathbb{Q}}
\def\FF{\mathbb{F}}
\def\II{\mathbb{I}}
\newcommand{\leg}[2]{({\scriptstyle{\frac{#1}{#2}}})}
\def\lra{{\ \xrightarrow{\ \ \ }\ }}
\newcommand{\Gal}[2]{\hbox{\small{\rm{Gal}}}_{#1/#2}}
\newcommand{\norm}[3]{{\hbox{\small{\rm{N}}}_{\scriptscriptstyle{#1}/\scriptscriptstyle{#2}}}{\hbox{${#3}$}}}
\newcommand{\mdf}[2]{\mathcal{M}_{#1}{#2}}
\begin{document}
\title{Evidence for and against Zauner's MUB Conjecture in $\CC^6$}
\author{Gary McConnell}
\address{Controlled Quantum Dynamics Theory Group, Imperial College, London}
\email{g.mcconnell@ic.ac.uk}
\author{Harry Spencer}
\address{Churchill College, University of Cambridge}
\email{hs696@cam.ac.uk}
\author{Afaq Tahir}
\address{Pembroke College, University of Oxford}
\email{afaq.tahir@pmb.ox.ac.uk}

\begin{abstract}
The problem of finding provably maximal sets of mutually unbiased bases in $\CC^d$, for composite dimensions $d$ which are not prime powers, remains completely open. 
In the first interesting case,~$d=6$, Zauner predicted that there can exist no more than three MUBs.  

We explore possible algebraic solutions in~$d=6$ by looking at their~`shadows' in vector spaces over finite fields. 
The main result is that if a counter-example to Zauner's conjecture were to exist, then it would leave no such shadow upon reduction modulo several different primes, forcing its algebraic complexity level to be much higher than that of current well-known examples. 

In the case of prime powers~$q \equiv 5 \bmod 12$, however, we are able to show some curious evidence which --- at least formally --- points in the opposite direction. 
In $\CC^6$, not even a single vector has ever been found which is mutually unbiased to a set of three MUBs. 
Yet in these finite fields we find sets of three `generalised MUBs' together with an orthonormal set of four vectors of a putative fourth MUB, all of which lifts naturally to a number field. 
\end{abstract}

\maketitle

\section*{Introduction}
The notion of \emph{mutually unbiased bases} or \emph{MUBs} arose in physics as an optimal choice of measurement bases for quantum tomography~\cite{schwinger, ivanovic, wootters, zauner}; although the concept was discovered independently in combinatorial design theory~\cite{camseid,CCKS,godsilroy}. 
The problem of finding provably maximal sets of MUBs in $\CC^d$ for non-prime-power dimensions $d$ remains completely open. 
In the first interesting case,~$d=6$, Zauner~\cite{zauner} has predicted that there can exist no more than three MUBs. 
Moreover there is a conjecture about orthogonal decompositions of Lie algebras~\cite{kost} which is equivalent to saying that for any such non-prime-power~$d$, a maximal set of $d+1$ MUBs cannot exist: see~\cite{ingemar3}. 

A priori there is no reason to expect the entries of MUB vectors over~$\CC$ to be algebraic. 
Indeed in~$\CC^6$ there is a catalogue of continuously parametrised families~\cite{jaming,szoll} of sets of three MUBs which outside a set of measure~$0$ are not unitarily equivalent to any algebraic set; showing that for arbitrary MUBs, the algebraic complexity level of the entries of the vectors can go all the way up to the transcendental. 
However since the equations are defined over~$\ZZ$, one approach to exploring the known bounds on the number of MUBs over~$\CC^d$ is to attack the problem as an algebraic question over a general ring, and to see whether the equations can be solved there.

When studying finite systems of equations with integer coefficients, a standard technique in number theory and algebraic geometry is to look at their reductions modulo a prime number. 
Nambu~\cite{nambu}, for example, also applied the same technique in a broad range of physical problems. 
Simplifying slightly for clarity, one looks at the image modulo~$p$ of their integer solutions, `most' of which will survive reduction to the field~$\FF_p$ of integers modulo~$p$. 
This picture naturally extends to number fields, and consequently the \emph{absence} of a solution in a finite field can under strict conditions be used to show a corresponding failure of solvability back up in the original number field. 
See~\cite{grief} for a recent similar approach to an affiliated problem.

The main thrust of this paper is to add a small result arising from this hitherto unexplored direction, to the already large body of evidence adduced in favour of Zauner's conjecture. 
Namely, in theorem~\ref{nogo} we demonstrate in a number of finite fields that sets of MUBs of size~$4$ in dimension~$6$ do {not} exist. 
{Prima facie} this is weak evidence that such sets of MUBs do not exist in the sorts of small degree number fields in which solutions have previously been studied: see for example~\cite{ingemar6,dardo,appingdang} and~\ref{redux}. 
Nevertheless, the translation back up to characteristic zero is still computationally intractable, so applying this result relies on studying each specific situation individually. 

In the case of prime powers~$q\equiv5 \bmod 12$, however, proposition~\ref{quiver} points tentatively in the opposite direction. 
We lift a set of three generalised MUBs plus an additional set of four vectors of a putative fourth MUB, directly to solutions in a number field. 
Note however that we are beginning with a set of three what one might deem \emph{hyperbolic MUBs}: ones whose entries are all \emph{totally real} numbers and where the property of unitarity is measured in terms of a real (algebraic) norm as opposed to a complex absolute value.

It is important to note that when~$F$ is finite or~$p$-adic, the sesquilinear form~$\langle\ ,\ \rangle$ yields a weaker geometric structure than an inner product~\cite{grove}. 
In particular,~$F^d$ will always contain isotropic vectors.  
Moreover, searches based on local minima of analytical functions --- which have been deployed in most of the attempts to solve this problem over the complex field, for example~\cite{debz, stefan, jaming, raynal, dardo, grassmc} --- are not possible at the finite field level. 
So there is no bijective geometric correspondence between our findings in arbitrary finite or~$p$-adic fields, and the solutions in Hilbert spaces.  

In the appendices we provide very brief notes on the results of the computer searches, the equations and their reductions, and general background reference material on the properties of these generalised MUBs.

\section{The MUB problem over~$\CC$ and beyond}
\subsection{State of knowledge over~$\CC$}\label{statknow}
The MUB problem over~$\CC$ has been attacked using a mixture of geometric, combinatorial and numerical methods.  
To some extent these approaches have been unified under the combinatorial umbrella of frames and complex projective~$2$-designs: see e.g.~\cite{godsilroy,CCKS,scottroy, grossmcc}. 
We refer to the vast body of work in the mathematical physics literature on this problem --- for example~\cite{schwinger, ingemar3, ingemar6, chterental, debz, babo} --- for the motivation for studying Hilbert space MUBs and for various standard results about their geometry. 

Before giving a generalised definition of MUBs in the next section, we set out the current state of knowledge on maximal sets of MUBs in complex Hilbert space~$\CC^d$. 
Let~$\mdf{d}{\CC}$ denote the maximum number of orthonormal bases of~$\CC^d$ which can be pairwise mutually unbiased to one another.

\begin{theorem*}\rm\cite{alltop, ivanovic, wootters, klaro}\it\ \ 
For~$d\geq2$, write $d=\prod_{i=1}^np_i^{r_i}$ as a product of its prime power factors with the $p_i$ ordered so that $p_1^{r_1}<p_2^{r_2}<\ldots<p_n^{r_n}$. 
Then: 
\begin{enumerate}[\rm (I)]
\item{$
p_1^{r_1}+1\ \leq\ \mdf{d}{\CC}\ \leq\  d+1.
$}

\item{When $d$ is a prime power, a maximal set of $d+1$ MUBs always exists in~$\CC^d$. }
\end{enumerate}
\end{theorem*}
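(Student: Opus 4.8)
The plan is to prove the upper bound of~(I) by an independent dimension count, then to establish~(II) by an explicit finite-field construction, and finally to deduce the lower bound of~(I) from~(II) by a tensor-product argument. First I would bound~$\mdf{d}{\CC}\leq d+1$ for every~$d$ by counting inside the real vector space of Hermitian~$d\times d$ matrices, which has dimension~$d^2$ and whose traceless part has dimension~$d^2-1$. To an orthonormal basis~$\{e_1,\dots,e_d\}$ I attach the rank-one orthogonal projectors~$P_i$ onto~$\CC e_i$ and their traceless parts~$\tilde P_i=P_i-\tfrac1d I$. Since $\mathrm{Tr}(P_iP_j)=|\langle e_i,e_j\rangle|^2$, a direct computation gives $\mathrm{Tr}(\tilde P_i\tilde P_j)=\delta_{ij}-\tfrac1d$ within one basis and $\mathrm{Tr}(\tilde P_i\tilde Q_j)=0$ between any two mutually unbiased bases, using that the overlaps there equal~$\tfrac1d$. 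As~$\sum_i\tilde P_i=0$, each basis spans a~$(d-1)$-dimensional subspace of the traceless Hermitian matrices, and the subspaces coming from distinct members of a MUB family are mutually orthogonal; hence~$m(d-1)\leq d^2-1=(d-1)(d+1)$ and~$m\leq d+1$.

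Next I would prove~(II) by exhibiting~$q+1$ mutually unbiased bases in~$\CC^q$ for a prime power~$q=p^r$. Working in the finite field~$\FF_q$ and fixing a nontrivial additive character~$\chi$, I would take the computational basis~$\{e_x\}_{x\in\FF_q}$ together with one basis for each slope~$a\in\FF_q$, where in odd characteristic the~$a$-basis consists of the vectors
\[
v^{(a)}_b=\frac{1}{\sqrt q}\sum_{x\in\FF_q}\chi(ax^2+bx)\,e_x,\qquad b\in\FF_q.
\]
Orthonormality within each basis, and unbiasedness between the computational basis and any~$a$-basis, follow immediately from~$\tfrac1q\sum_x\chi(cx)=\delta_{c,0}$. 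The crux is the overlap for~$a\neq a'$, which collapses to a quadratic Gauss sum~$\tfrac1q\sum_x\chi((a'-a)x^2+(b'-b)x)$.

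The main obstacle is the evaluation of this exponential sum. In odd characteristic I would complete the square---legitimate because~$2$ is invertible---to reduce it to the classical quadratic Gauss sum, whose modulus is exactly~$\sqrt q$; this yields~$|\langle v^{(a)}_b,v^{(a')}_{b'}\rangle|^2=1/q$, as required. The genuinely delicate case is~$p=2$, where completing the square fails: here I would follow Klappenecker--R\"otteler and replace~$\FF_{2^r}$ by the Galois ring~$GR(4,r)$, defining the analogous vectors through a character of that ring and re-deriving the corresponding exponential-sum bound.

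Finally I would deduce the lower bound of~(I) for composite~$d=\prod_{i=1}^n q_i$ with~$q_i=p_i^{r_i}$ by tensoring. Given full families of~$q_i+1$ MUBs~$\{B^{(i)}_k\}_k$ in each~$\CC^{q_i}$ from~(II), I form~$B_k=B^{(1)}_k\otimes\cdots\otimes B^{(n)}_k$ for the indices~$k=0,1,\dots,q_1$ that the smallest factor allows. Because the modulus of an inner product is multiplicative under tensor products, vectors drawn from distinct~$B_k$ satisfy~$|\langle\cdot,\cdot\rangle|^2=\prod_i q_i^{-1}=d^{-1}$, so these form~$q_1+1=p_1^{r_1}+1$ MUBs in~$\CC^d$, giving the stated lower bound.
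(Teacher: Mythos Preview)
Your argument is correct and is exactly the standard route through the cited references: the Ivanovi\'c trace-form dimension count for the upper bound, the Wootters--Fields/Klappenecker--R\"otteler finite-field construction for~(II), and the tensor-product argument of~\cite{klaro} for the lower bound of~(I). Note that the paper itself does not supply a proof of this theorem---it is stated as a quotation of known results---but the very same three ingredients are recycled, essentially verbatim, in the paper's finite-field analogues: the trace-form count reappears in the proof of Proposition~\ref{finivan}, the WF construction is reduced modulo~$p$ in Proposition~\ref{mdffpp}, and the tensor-product step is Corollary~\ref{drown}. So your approach is not merely consistent with the paper's references but is precisely the template the paper adapts.
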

The lower bound, which in the general case is currently limited to the outcome of taking tensor products~\cite{zauner,babo,asch,klaro}, has in fact been marginally improved~\cite{wocjan} in some non-prime-power squared dimensions like~$d = 676$. 
It should be stressed again that no-one yet knows tight lower or upper bounds for~$\mdf{d}{\CC}$ for any~$d$ not a power of a prime.

\subsection{Generalised definition of MUBs}\label{gendef}
A formal definition of mutual unbiasedness may be made by analogy with the motivating complex case. 
For other instances of this see for example~\cite{chterental,boyband,vandam}\footnote{Although ostensibly looking at the same question,~\cite{vandam} was concerned with infinite-dimensional Hilbert spaces. The results and techniques are completely unrelated to those of this paper. }; and for further discussion of our case see~\ref{axcee} and~\ref{axceetoo}. 
Throughout,~$d\in\NN$ will denote the dimension of our vector space. 
Let~$F$ be any field of characteristic not dividing~$2d$, with an automorphic involution~$\sigma$ whose fixed subfield is~$K$. 
So in the usual complex case, $F=\CC$ and $K=\RR$ and~$\sigma$ is complex conjugation. 
For any~$\alpha\in F$,~$\norm{F}{K}{\alpha} =  \alpha\alpha^\sigma$ is the field norm down to~$K$.

By~$F^d$ we shall always mean a~\emph{unitary space} \cite[chapter 10]{grove}: namely, a~$d$-dimensional vector space over~$F$ equipped with a non-degenerate \emph{Hermitian} form defined for any pair of vectors~$\mathbf{u} = (u_j), \mathbf{v} = (v_j) \in F^d$ by: 
$$\langle {\mathbf{u}}, {\mathbf{v}}\rangle =  \mathbf{u}^\dag \cdot \mathbf{v}  =  \sum_{j=1}^{d}u_j^\sigma v_j, 
$$
where for any matrix~$M$ with entries in~$F$, $M^\dag$ denotes its conjugate transpose with respect to the involution~$\sigma$. 
The word \emph{basis} will always mean~\emph{orthonormal basis}, with respect to the given form; so we may write the basis set~${\mathcal{B}} = \{ {\mathbf{b}}_1,\ldots,{\mathbf{b}}_d \}$ as the column vectors of a \emph{unitary} --- as defined with respect to this Hermitian form --- matrix~$[ {\mathbf{b}}_1,\ldots,{\mathbf{b}}_d ]$. 

\begin{definition}\label{mubdef}
Let~$\mathbf{u} , \mathbf{v} \in F^d$ be any choice of~\emph{unit vectors}: that is, satisfying~$\langle \mathbf{u} , \mathbf{u} \rangle = \langle \mathbf{v} , \mathbf{v} \rangle = 1$. 
We say that~$\mathbf{u} , \mathbf{v}$ are \emph{mutually unbiased} or~\emph{MU} to one another if
\begin{equation}\label{mubness}
\norm{F}{K}{ \langle \mathbf{u} , \mathbf{v} \rangle } \ =\ 1 / d . 
\end{equation}
\noindent 
Let~${\mathcal{B}}$ and~${\mathcal{C}}$ be any two orthonormal bases. 
If~$\norm{F}{K}{ \langle \mathbf{b}_i , \mathbf{c}_j \rangle } = 1 / d$ for every pair of vectors~$ \mathbf{b}_i \in \mathcal{B}$, $\mathbf{c}_j \in \mathcal{C}$ then we say that~${\mathcal{B}},{\mathcal{C}}$ are \emph{mutually unbiased bases}. 

We use the same notation as for the complex case, in that~$\mdf{d}{F}$ will denote the size of a maximal set of MUBs in~$F^d$. 
\end{definition}
The definition is motivated by the situation in~$\CC^d$, in that if we require that {every} inner product~$\langle {\mathbf{b}}_i, {\mathbf{c}}_j\rangle$ for $1\leq i,j\leq d$ have the same absolute value, then in fact that common value is readily proven~\cite{schwinger} to be~$1/\sqrt{d}$. 

It is important to note that in translating \emph{particular} known algebraic solutions directly from number fields embedded in~$\CC$, down to finite and~$p$-adic fields, at least half of the time we encounter a situation where the~$q$-th power Frobenius automorphic involution~$\sigma$ does not act analogously to complex conjugation on the roots of unity. 
This means that the Wootters \& Fields (henceforth just WF) solutions in~\cite{wootters} to the equations in~$\CC$ simply \emph{do not exist} modulo~$p$. 
This is evidenced for example in the number of primes implicitly excluded from the hypotheses of proposition~\ref{mdffpp}, and even more prominently in the tabulation of our exhaustive search results in~\ref{tableau}. 

Conversely, any finite field solutions are by definition cyclotomic numbers; so this constraint is inescapable. 
Using a set of equations derived from the complex situation is thus rendered meaningless. 

Throughout the paper, therefore, we have stuck with the \emph{geometric} interpretation of the MUB existence question, adjusting the equations so that in each finite field they reflect the notion of an adjoint or Hermitian dot product, rather than interpreting it strictly as the specialisation of a complex algebraic variety.

The level of overdetermination in this problem is formidable. 
Just in the relatively tight case in proposition~\ref{quiver}, even after all sensible reductions, we still begin with~$150$ variables over~$\FF_q$ but~$231$ equations. 
Elimination-theoretic tools like Gr\"obner bases had to be rejected in favour of an exhaustive search through small finite vector spaces. 
We have used~MAGMA software~\cite{magma} throughout.

\section{MUBs over finite fields : implications for Zauner's conjecture}\label{nonarx}
The link between the study of complex MUBs and those over quadratic extensions of finite fields is given explicitly by the correspondence between the canonical complex conjugation involution whose fixed field is~$\RR$, and the~$q$-th power \emph{Frobenius} involution which is the unique non-trivial automorphism of the field~$F$ fixing the base field~$K$. 
It is this structural parallel which affords the possibility of analogous geometric results in the two contexts. 
This is explained further in~\ref{drownsec}. 

Our attempts to explore the lower bound for~$\mdf{6}{\CC}$ via the images of algebraic MUBs under reduction in finite fields, have led us to the following.

\begin{theorem}\label{nogo}
~$\mdf{6}{\FF_{q^2}} \leq 3$ for every prime power~$q = p^r$ in the set
\[
\{ 5 , 25, 7 , 49 , 11 , 13 , 17 , 19 , 23, 29, 31, 37, 41 \}.  
\]
\end{theorem}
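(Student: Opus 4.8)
The plan is to reduce the non-existence of four MUBs to a finite search and then to carry that search out for each listed prime power. Note first that the characteristic hypothesis $p\nmid 2d=12$ holds throughout (this is exactly why the list begins at $5$ and omits $2,3$), so the Hermitian form and the norm $\norm{F}{K}{\cdot}$ behave as in Definition~\ref{mubdef}, with $\sigma$ the $q$-th power Frobenius and $K=\FF_q$.

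I would begin by normalising. Acting by a unitary change of basis, I may assume the first basis is the standard basis $\{\mathbf{e}_1,\ldots,\mathbf{e}_6\}$. Then any unit vector MU to the standard basis has every coordinate $\alpha$ satisfying $\norm{F}{K}{\alpha}=1/6$; since the norm map $\FF_{q^2}^\times\to\FF_q^\times$ is surjective with kernel $\ker(\norm{F}{K}{\cdot})=\{x:x^{q+1}=1\}$ of order $q+1$, the set $S$ of such $\alpha$ is a single coset and has exactly $q+1$ elements. Consequently each remaining basis is a unitary matrix all of whose entries lie in the finite set $S$ (the unit-vector condition $\langle\mathbf{v},\mathbf{v}\rangle=\sum_j\norm{F}{K}{v_j}=6\cdot\tfrac16=1$ being automatic), and four MUBs exist if and only if there are three such matrices $B_1,B_2,B_3$ that are pairwise MU, i.e.\ each entry of $B_i^\dagger B_j$ again lies in $S$. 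This recasts the question as the existence of three mutually unbiased generalised Hadamard matrices of order $6$ over $\FF_{q^2}$.

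Next I would shrink the search by dephasing. The stabiliser of the standard basis --- left multiplication by a diagonal unitary, together with right multiplication by a diagonal unitary and a column permutation --- lets me fix the first row and first column of each $B_i$ to a chosen constant in $S$ and place $B_1$ in a canonical monomial representative, reducing each matrix to its $5\times5$ core with entries in $S$ subject to the orthonormality relations forced by unitarity. For each $q$ in the list I would then enumerate, by a backtracking column-by-column construction, all such dephased generalised Hadamard matrices, and test every triple for the mutual-unbiasedness condition above, verifying in each case that no admissible triple survives and hence that $\mdf{6}{\FF_{q^2}}\leq3$.

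The main obstacle is the sheer combinatorial size of the search: a naive pass over $5\times5$ cores would involve $(q+1)^{25}$ candidates --- already hopeless at $q=5$ --- so the crux is the pruning and symmetry reduction. The orthogonality constraints must be imposed incrementally so as to kill partial cores as early as possible, and the monomial-equivalence group must be quotiented out to keep the number of inequivalent representatives manageable. The prime powers $25$ and $49$ and the larger primes $37,41$ are where this is most demanding, and it is here that the exhaustive \textsc{magma} computation recorded in the appendices does the real work; the remainder of the argument is the bookkeeping that certifies it as a genuine non-existence proof rather than a heuristic.
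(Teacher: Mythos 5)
Your proposal is correct and follows essentially the same route as the paper: the authors likewise fix $\mathcal{B}_0$ as the computational basis, restrict all further vector entries to the coset $\delta U$ of norm-$1/6$ elements (of size $q+1$), quotient by the monomial/dephasing equivalences of~\ref{rear} and~\ref{dumbdown}, and certify non-existence of a fourth MUB by an exhaustive \textsc{magma} search whose results are tabulated in~\ref{tableau}. The only cosmetic difference is that the paper records the stronger data $\mdf{6}{\FF_{q^2}}\in\{2,3\}$ together with the auxiliary quantity $\nu_6$, rather than phrasing the search purely as non-existence of a mutually unbiased Hadamard triple.
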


\begin{proof}
The evidence from the numerical searches is tabulated in~\ref{tableau}. 
The methodology is explained in~\ref{simp}. 
\end{proof}

Theorem~\ref{nogo} supports Zauner's conjecture in a way which has not been explored before. 
However, as mentioned in the introduction, we also have a result which gives some slight evidence in the opposite direction. 

\begin{proposition}\label{quiver}
Let~$F = \FF_{q^2}$ for some~$q \equiv 5 \bmod 12$. 
Then there exists in~$F^6$ a set~$\mathcal{S}_F  =  \{ \mathcal{B}_0, \mathcal{B}_1, \mathcal{B}_2 \}$ of three pairwise mutually unbiased bases, together with an orthonormal set of four vectors which are all mutually unbiased to the three bases. 
In each case the solutions lift to the number field extension~$K(\sqrt{3}) /  K$ where~$K = \QQ(i,\sqrt{6})$. 
\end{proposition}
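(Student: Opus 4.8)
The plan is to produce the configuration explicitly over the quadratic extension $L = K(\sqrt 3)$, to verify the generalised MUB relations there as identities in characteristic zero, and then to check that everything survives reduction modulo any prime above a rational prime dividing $q$ when $q \equiv 5 \bmod 12$. The first step is to fix the involution. On $L = \QQ(i,\sqrt 6,\sqrt 3)$ the nontrivial element $\sigma$ of $\Gal{L}{K}$ fixes $K = \QQ(i,\sqrt 6)$ and sends $\sqrt 3 \mapsto -\sqrt 3$; since $\sqrt 6 = \sqrt 2\,\sqrt 3$ is fixed, this forces $\sqrt 2 \mapsto -\sqrt 2$ while, crucially, $\sigma(i) = i$. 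I would then record the arithmetic dictionary that makes $\sigma$ the shadow of Frobenius: $q \equiv 5 \bmod 12$ gives $q \equiv 1 \bmod 4$, so $i \in \FF_q$ is fixed by $x \mapsto x^q$; and $q \equiv 2 \bmod 3$, whence by quadratic reciprocity $\leg{3}{q} = -1$ and $\sqrt 3 \in \FF_{q^2}\setminus\FF_q$ is negated by the $q$-power Frobenius. Under this matching the $\sigma$-Hermitian form on $L^6$ reduces to the Frobenius-Hermitian form on $\FF_{q^2}^6$, and the relative norm $\norm{L}{K}{}$ reduces to $\norm{\FF_{q^2}}{\FF_q}{}$.

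Second, I would write down the three bases $\mathcal B_0,\mathcal B_1,\mathcal B_2$ and the four companion vectors as explicit columns over $L$. Since over $\CC$ not one vector mutually unbiased to three MUBs in $\CC^6$ has ever been found, there is no analytic template to specialise; the realistic route is to harvest the pattern from the exhaustive finite-field search over small $\FF_{q^2}$, with the three bases totally real (entries in $\QQ(\sqrt 2,\sqrt 3)$, on which $\sigma$ negates $\sqrt 2$ and $\sqrt 3$ and the norm lands in the real field $\QQ(\sqrt 6)$) and the four extra vectors bringing in $i$. With the matrices chosen I would verify the three defining conditions as identities in $L$: that each $\mathcal B_k^\dagger \mathcal B_k$ equals the identity; that $\norm{L}{K}{\langle \mathbf b,\mathbf c\rangle} = 1/6$ for every cross pair $\mathbf b\in\mathcal B_j$, $\mathbf c \in \mathcal B_k$ with $j \neq k$; and that the four companions are orthonormal and each satisfies $\norm{L}{K}{\langle \mathbf v,\mathbf b\rangle} = 1/6$ against every $\mathbf b$ in every $\mathcal B_k$. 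Applying $\sigma$ collapses each relation to a check inside the fixed field $K$, so the verification is finite if laborious.

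Third comes the descent. Fixing a rational prime $p \mid q$, the congruence $q \equiv 5 \bmod 12$ forces $p \nmid 2d = 12$, so $6$ and all normalising denominators are local units; I would choose a prime $\mathfrak p$ of $L$ with residue field $\FF_{q^2}$, on which reduction carries $\sigma$ to the $q$-power Frobenius by the first step. As reduction modulo $\mathfrak p$ is a ring homomorphism, the identities above pass intact to $\FF_{q^2}$, yielding the asserted set $\mathcal S_F$ together with its four companions; read backwards, the same homomorphism shows that every such finite-field configuration is the image of the single characteristic-zero model over $L$, which is the stated lift.

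The main obstacle is concentrated in the second step. Locating the configuration at all is the genuine difficulty, because the four companion vectors form a fragment of a fourth MUB of exactly the kind that refuses to exist over $\CC$, so no known complex family can be leaned on; the pattern must be reverse-engineered from finite-field data and then proved to satisfy the relations honestly over $L$, not merely modulo the small primes where it was first detected. Establishing that a single lift works uniformly for all $q \equiv 5 \bmod 12$, rather than through field-by-field coincidences, is the crux of the argument.
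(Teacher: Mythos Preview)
Your strategy matches the paper's almost exactly: harvest the pattern from the finite-field searches, write it over $L=\QQ(i,\sqrt6,\sqrt3)$ with the involution $\sigma\colon\sqrt3\mapsto-\sqrt3$, verify the Hermitian identities in characteristic zero, and then reduce. The paper does precisely this, exhibiting $\mathcal{B}_1,\mathcal{B}_2$ with entries in $\tfrac{1}{\sqrt6}\ZZ[\sqrt3]$ and the four extra vectors in $\tfrac{1}{\sqrt6}\ZZ[\sqrt3,\omega]$ (your ``bringing in $i$'' is their bringing in $\omega=\tfrac{-1+i\sqrt3}{2}$, on which $\sigma$ acts as $\omega\mapsto\omega^2$).

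There is, however, a genuine gap in your arithmetic dictionary. You check that $i\in\FF_q$ and $\sqrt3\notin\FF_q$, but you never check that $\sqrt6\in\FF_q$, which is what is needed for $K=\QQ(i,\sqrt6)$ to land in the Frobenius-fixed field and hence for $\sigma$ to reduce to the $q$-th power map. For $q\equiv5\bmod24$ one has $\leg{2}{q}=\leg{3}{q}=-1$, so $\leg{6}{q}=1$ and all is well; but for $q\equiv17\bmod24$ (still $\equiv5\bmod12$) one has $\leg{2}{q}=1$ and $\leg{3}{q}=-1$, so $\sqrt6\in\FF_{q^2}\setminus\FF_q$ and Frobenius negates it. In that case the primes of $L$ above $p$ are \emph{split} in $L/K$, so there is no prime $\mathfrak p$ at which $\sigma$ reduces to Frobenius, and your reduction step fails as stated: the naively reduced entries have $\norm{\FF_{q^2}}{\FF_q}{\tfrac{1}{\sqrt6}}=-\tfrac16$ rather than $\tfrac16$. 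The paper's fix is the same twist used in its prime-power proposition: multiply every non-$\mathcal{B}_0$ vector by $i$, which lies in $\FF_q$ but has $\norm{}{}{i}=i\cdot i=-1$, restoring $\norm{}{}{\tfrac{i}{\sqrt6}}=\tfrac16$. Once you split into the two residue classes mod $24$ and insert this twist in the second, your argument goes through.
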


\begin{proof}
See~\ref{quills} where the calculations are explained. 
Here, as in the rest of the paper, the initial basis set~$\mathcal{B}_0$ is chosen to be the \emph{computational basis} consisting of the $d$ vectors $\mathbf{e}_j = (0,\ldots,0,1,0,\ldots,0),$
where the unique non-zero entry occurs at the~$j$-th position. 
$\mathcal{B}_0$ is represented in matrix form by the identity matrix. 
An intermediate step was to lift the original solutions from the searches over finite fields, using standard~$p$-adic techniques, up several levels in order to find the defining equations for the entries. 
It then became clear that they lifted not only to unramified extensions~$F_\wp$ of~$\QQ_p$ but in fact all the way to a number field contained in~$F_\wp$. 
It is also important to note that there are infinitely many possible number fields to which such solutions may be lifted: we have just chosen the simplest. 
\end{proof}

As far as we can determine from the published literature, the arXiv, and personal communications with several other researchers active in this field (thanks to Ingemar Bengtsson, Stefan Weigert, Markus Grassl, Dan McNulty, Marcus Appleby and Dardo Goyeneche), no-one has yet been able to find a \emph{single vector} in~$\CC^6$ which is mutually unbiased to a known set of three MUBs~\cite{dardo, stefandan}. 
Indeed, using Gr\"obner basis techniques for varying values of~$d$, Markus Grassl et al.~have demonstrated many cases wherein the extension of certain \emph{non-maximal} sets of bases in~$\CC^d$ \emph{by just a single vector} is impossible, elucidating a bewilderingly rich variety of possible outcomes given simple variations in the starting bases~\cite[\S III]{grassmc},~\cite{desig}. 

So the existence of this extra set of four vectors in proposition~\ref{quiver}, even if it only obtains in this artificial `hyperbolic' context, raises a natural question as to whether such a phenomenon occurs as some sort of Galois image of an actual set of MUBs, analogously for example to \emph{ghost SICs}~\cite{ghosts} in the cousin \emph{SIC-POVM} problem~\cite{ing}.

\section{Values of~$\mdf{d}{F}$ when~$F$ is a finite field}
The best bounds which may be stated in full generality are as follows. 

\begin{proposition}\label{finivan}
For any quadratic extension of finite fields~$F/K$ as above and any dimension~$d\geq2$ coprime to the characteristic~$p$ of~$K$,
$$1 \leq \mdf{d}{F} \leq d+1 . $$
\end{proposition}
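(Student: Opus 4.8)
The lower bound is immediate: the computational basis $\mathcal{B}_0 = \{\mathbf{e}_1,\ldots,\mathbf{e}_d\}$ satisfies $\langle \mathbf{e}_i,\mathbf{e}_j\rangle = \delta_{ij}$ for the given Hermitian form, so it is an orthonormal basis and $\mdf{d}{F} \geq 1$. For the upper bound the plan is to adapt the classical operator-counting argument from $\CC^d$ to the $K$-vector space $\mathcal{H}$ of Hermitian $d\times d$ matrices over $F$ (those $M$ with $M^\dagger = M$). First I would record that $\mathcal{H}$ has dimension $d^2$ over $K$ --- the $d$ diagonal entries lie in $K$, while the $\binom{d}{2}$ conjugate off-diagonal pairs each contribute $\dim_K F = 2$ --- and that the pairing $(A,B)\mapsto \mathrm{Tr}(AB)$ is a symmetric $K$-bilinear form on $\mathcal{H}$. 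The key check here is that $\mathrm{Tr}(AB)$ is fixed by $\sigma$ and hence lies in $K$, which follows from $\sigma(\mathrm{Tr}(AB)) = \mathrm{Tr}((AB)^\dagger) = \mathrm{Tr}(BA) = \mathrm{Tr}(AB)$ for $A,B \in \mathcal{H}$.

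Next I would attach to each unit vector $\mathbf{b}$ the rank-one projection $P_\mathbf{b} = \mathbf{b}\mathbf{b}^\dagger \in \mathcal{H}$ and extract the two facts that drive the count. For any two unit vectors, $\mathrm{Tr}(P_\mathbf{b}P_\mathbf{c}) = \langle \mathbf{b},\mathbf{c}\rangle\langle \mathbf{c},\mathbf{b}\rangle = \norm{F}{K}{\langle \mathbf{b},\mathbf{c}\rangle}$, so the defining condition~\eqref{mubness} forces this to equal $1/d$ across distinct bases, while orthonormality gives $\mathrm{Tr}(P_i P_j) = \delta_{ij}$ within a single basis. Since an orthonormal basis resolves the identity, $\sum_{i=1}^d P_i = I$ for each basis. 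Passing to the traceless parts $Q_i = P_i - \tfrac1d I$ --- which requires $d \in K^\times$, guaranteed by $\gcd(d,p)=1$ --- these lie in the $(d^2-1)$-dimensional traceless subspace $\mathcal{H}_0$, satisfy $\sum_{i=1}^d Q_i = 0$ in each basis, and a short computation gives the Gram entries $\mathrm{Tr}(Q_i^{(k)}Q_j^{(l)}) = \delta_{ij} - \tfrac1d$ when $k=l$ and $0$ when $k \neq l$.

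Given $m$ mutually unbiased bases, I would then select $Q_1^{(k)},\ldots,Q_{d-1}^{(k)}$ from each basis, a total of $m(d-1)$ vectors in $\mathcal{H}_0$. Their Gram matrix under the trace form is block-diagonal, one block per basis, each block equal to $I_{d-1} - \tfrac1d J_{d-1}$ with $J_{d-1}$ the all-ones matrix. The crux is to show these vectors are linearly independent over $K$, which over $\CC$ would follow from positive-semidefiniteness of the Gram matrix --- an argument unavailable over a finite field. Instead I would compute the block determinant directly: the eigenvalues of $J_{d-1}$ are $d-1$ and $0$, so $\det(I_{d-1} - \tfrac1d J_{d-1}) = 1/d$, a nonzero element of $K$ precisely because $d$ is coprime to the characteristic. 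Hence each block, and therefore the whole block-diagonal Gram matrix, is invertible, forcing linear independence. Dimension counting then yields $m(d-1) \leq \dim_K \mathcal{H}_0 = d^2-1 = (d-1)(d+1)$, so $m \leq d+1$.

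The main obstacle, and the only place where the finite-field setting genuinely departs from the complex one, is exactly this linear-independence step: over $\CC$ one leans on an inner product being positive-definite, whereas here I must instead verify that the trace form takes values in $K$, that $d$ is invertible so that the $Q_i$ and the value $1/d$ are defined, and that the explicit Gram determinant $1/d$ is nonzero. All three are controlled by the hypothesis $\gcd(d,p)=1$, which is why it appears in the statement.
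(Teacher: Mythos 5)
Your proof is correct and follows essentially the same route as the paper: the Ivanovi\v{c}-style trace-form argument on the $d^2$-dimensional $K$-space of Hermitian matrices, with the shifted rank-one projectors $P_i-\tfrac1d I$ landing in the $(d^2-1)$-dimensional traceless subspace and a dimension count giving $m\le d+1$. If anything you are more explicit than the paper at the one genuinely delicate finite-field step: where the paper appeals to non-degeneracy of the ambient trace form on $\mathcal{H}_0$ and says ``the result follows,'' your block Gram determinant $\det\left(I_{d-1}-\tfrac1d J_{d-1}\right)=1/d\neq 0$ directly certifies the linear independence needed, ruling out the isotropic-intersection pathology that positive-definiteness silently excludes over $\CC$.
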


\begin{proof}
Despite there always being at least three MUBs over~$\CC^d$ for any~$d\geq 2$ --- which~\cite{babo} may be regarded simply as the eigenbases of the generalised Pauli operators~$X,Z$ and their product~$XZ$ --- the lower bound in this finite field case cannot be improved upon in general. 
For example, see table~\ref{teatoo} in~\ref{tableau} in which~$\mdf{3}{\FF_{q^2}}$ is shown to be~$1$ for essentially half of all~$q$.

For the upper bound, which again is saturated in many cases, we adapt the argument from the complex case in~\cite{ivanovic}, following the exposition in~\S4 of~\cite{ingemar3}. 

\def\mddf{{\hbox{\rm M}_{d\times d}(F)}}

Define a non-degenerate Hermitian trace form~$\hbox{\rm Tr}\left( A^\dag B \right)$ on the matrix algebra~$\mddf$ of degree~$d$ over~$F$. 
Consider any orthonormal basis~$\mathcal{B} = \{\mathbf{v}_1 , \ldots , \mathbf{v}_d \}$ of~$F^d$. 
The corresponding rank~$1$ trace~$1$ projectors~$\pi_\mathbf{v_k}  =  \mathbf{v}_k \otimes \mathbf{v}_k^\dag$ are orthogonal unit vectors in~$\mddf$ under this trace form. 
Their endpoints span a~$d$-simplex whose barycentre is~$\frac{1}{d}\II$, where~$\II$ denotes the identity matrix in~$\mddf$. 
The set resulting from subtracting~$\frac{1}{d}\II$ from each~$\pi_\mathbf{v_k}$ spans a~$K^{d-1}$ subspace of the space~$\mathcal{H}_0 \cong K^{d^2-1}$ of trace~$0$ Hermitian operators in~$\mddf$. 
The pairwise Hermitian products between distinct shifted projectors arising from~$\mathcal{B}$ are now~$\frac{-1}{d}$ rather than zero.

However, given any~$\mathbf{b} \in \mathcal{B}$ and~$\mathbf{c} \in \mathcal{C}$ from mutually unbiased bases~$\mathcal{B}$ and~$\mathcal{C}$, the shifted trace product is zero: 
\Small
\begin{eqnarray*}
\hbox{\rm Tr}\big((\mathbf{b} \otimes \mathbf{b}^\dag - \frac{1}{d}\II )^\dag (\mathbf{c} \otimes \mathbf{c}^\dag  - \frac{1}{d}\II )\big)  &  =  &  \hbox{\rm Tr} \big( (\mathbf{b} \otimes \mathbf{b}^\dag)(\mathbf{c} \otimes \mathbf{c}^\dag)\big) - 
\frac{1}{d}\hbox{\rm Tr}\big(  (\mathbf{b} \otimes \mathbf{b}^\dag  + \mathbf{c} \otimes \mathbf{c}^\dag )\big)  + \hbox{\rm Tr} \frac{1}{d^2}\II  \\
&  =  &  \langle \mathbf{b} , \mathbf{c} \rangle  \langle \mathbf{c} , \mathbf{b} \rangle  -  \frac{2}{d}  + \frac{1}{d} \\
&  =  &  0.
\end{eqnarray*}
\normalsize
That is to say, mutual unbiasedness between the bases down in~$F^d$ lifts to orthogonality between the corresponding translated operator subspaces in~$\mathcal{H}_0$. 

Hence~$n$ MUBs of~$F^d$ produce~$n$ mutually orthogonal $(d-1)$-dimensional $K$-subspaces inside~$K^{d^2-1}$. 
If~$p\mid {\rm char}(K)$ then for example the identity matrix lies in~$\mathcal{H}_0$ and its trace product with every other matrix in~$\mathcal{H}_0$ is zero, rendering degenerate the~$K$-valued restriction to~$\mathcal{H}_0$ of this trace form. 
However when~$p \nmid {\rm char}(K)$, by considering a $K$-basis for~$\mathcal{H}_0$ formed from simple $F$-linear combinations of the elementary matrices~$E_{ij}$ we see that the trace form remains non-degenerate; and the result follows. 
\end{proof}

\begin{remark}
The diagonalisation-based argument of~\cite{babo} does not work here, because no Hermitian structure is possible. 
Indeed, the Galois group $\Gal{\overline{F}}{F} \cong \widehat{\ZZ}$ of the algebraic closure $\overline{F}$ of a finite field $F$ is torsion-free. 
So in particular there is no surrogate operator for complex conjugation to match that of $\Gal{\CC}{\RR}$. 
\end{remark}

\subsection{Values of~$\mdf{d}{\FF_{q^2}}$ when~$d$ is a prime power}\label{fdpp}
Given any prime power dimension~$d=l^k$, the next result yields a class of primes~$p$ of positive density, by Dirichlet's theorem on primes in arithmetic progressions, for which the complex WF solutions have \emph{good reduction} over fields of characteristic~$p$. 

\begin{proposition}\label{mdffpp}
Let~$l$ be a prime and~$k$ a positive integer. 
Let the prime $p\neq l$ and~$r\geq1$ satisfy~$p^r\equiv-1\bmod l^k$ when~$l$ is odd, or simply~$p^r \equiv -1 \bmod 4$ when $l=2$. 
Then writing~$q = p^r$, 
$$\mdf{l^k}{\FF_{q^2}}  =  {l^k+1}.$$ 
\end{proposition}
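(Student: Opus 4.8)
Since $p\neq l$, the dimension $d=l^{k}$ is coprime to the characteristic, so Proposition~\ref{finivan} already supplies the upper bound $\mdf{l^k}{\FF_{q^2}}\leq l^{k}+1$. The whole task is therefore to produce a \emph{complete} set of $l^{k}+1$ MUBs in $F^{l^{k}}$, and the plan is to transport the complex Wootters--Fields construction into $F=\FF_{q^2}$; equivalently, to reduce the complex WF solution modulo a prime of good reduction. Recall that each WF vector in $\CC^{l^k}$ has entries of the form $\tfrac{1}{\sqrt d}\,\zeta^{\,e}$, where the exponents $e$ are trace forms of quadratic expressions in the index, $\zeta=\omega_{l}$ is a primitive $l$-th root of unity for odd $l$, and $\zeta=i$ is a primitive fourth root for $l=2$. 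The only denominators are powers of $d=l^{k}$, which is prime to $p$, so the construction has good reduction at any prime $\wp$ above $p$.

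The role of the congruence is to make the residual involution mirror complex conjugation. Writing $\sigma\colon x\mapsto x^{q}$ for the generator of $\Gal{F}{K}$, the hypothesis forces $q\equiv-1$ modulo the order of the phase root of unity --- a fortiori from $q\equiv-1\bmod l^{k}$ for odd $l$, and exactly $q\equiv-1\bmod 4$ for $l=2$ --- so that $\zeta^{\sigma}=\zeta^{q}=\zeta^{-1}$, which is precisely how complex conjugation acts on $\zeta$ in the cyclotomic field, while at the same time $\zeta$ lands inside $\FF_{q^{2}}\setminus\FF_{q}$. Concretely one may now rebuild the bases inside $F$: fix a scalar $\lambda\in F^{\times}$ with $\norm{F}{K}{\lambda}=1/d$, which exists because the norm $F^{\times}\to K^{\times}$ is surjective and $1/d\in K^{\times}$, and take the vectors with entries $\lambda\,\zeta^{\,e}$ using the same exponents as over $\CC$.

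It remains to check the two defining properties. Within one basis the unit-norm condition reads $\langle\mathbf v,\mathbf v\rangle=d\,\norm{F}{K}{\lambda}=1$ by the choice of $\lambda$, while orthogonality of distinct vectors collapses to the character-orthogonality identity $\sum_{j}\zeta^{c j}=0$ for $c\not\equiv 0$, valid over any field containing a primitive root of the relevant order since $\zeta^{c}\neq 1$. Between two bases the inner product is $\langle\mathbf u,\mathbf v\rangle=\norm{F}{K}{\lambda}\cdot G$, where $G$ is the associated quadratic Gauss sum in the phases; because $\norm{F}{K}{\lambda}\in K$ is $\sigma$-fixed, $\norm{F}{K}{\langle\mathbf u,\mathbf v\rangle}=\left(\norm{F}{K}{\lambda}\right)^{2}\norm{F}{K}{G}=(1/d^{2})\,\norm{F}{K}{G}$. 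The mutual-unbiasedness condition~\eqref{mubness} is thus equivalent to the single identity $\norm{F}{K}{G}=G\,G^{\sigma}=d$, the finite-field avatar of the classical $|G|^{2}=d$. Since $\sigma$ inverts $\zeta$, the expansion of $G\,G^{\sigma}$ is term-for-term the complex expansion of $G\overline G$; that sum evaluates, by character orthogonality, to the integer $d$, and the integer identity survives reduction modulo $p$ because $d\not\equiv 0$. With the $l^{k}+1$ bases in hand, the upper bound forces equality $\mdf{l^k}{\FF_{q^2}}=l^{k}+1$.

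The step I expect to be the main obstacle is the even case $l=2$. Over $\FF_{2^{k}}$ the quadratic form $a c^{2}+b c$ degenerates --- in characteristic $2$ the map $c\mapsto c^{2}$ is additive, so it produces no genuine Gauss sum and no unbiasedness --- which is exactly why Wootters and Fields pass to fourth-root phases over the Galois ring, and why the hypothesis relaxes to $q\equiv-1\bmod 4$. Re-deriving the norm identity $\norm{F}{K}{G}=d$ in that $\ZZ/4$ setting, and confirming that $\sigma$ realises complex conjugation uniformly on all of the fourth-root data, is where the argument needs the most care; the odd case is comparatively routine. Once this is settled, the positive density of admissible primes $p$ is immediate from Dirichlet's theorem, as remarked before the statement.
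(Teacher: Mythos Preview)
Your approach is essentially the paper's: transport the Wootters--Fields construction by arranging that the $q$-th power Frobenius mimics complex conjugation on the relevant roots of unity, then invoke Proposition~\ref{finivan} for the upper bound. The one genuine difference is your normalisation. Where the paper uses $1/\sqrt{l^k}$ literally and must then do a case analysis on whether $\sqrt{l^k}$ lies in $\FF_q$ --- twisting by an element $\nu$ of norm $-1$ when it does not --- you instead pick any $\lambda$ with $\norm{F}{K}{\lambda}=1/d$ from the outset, which exists by surjectivity of the norm and sidesteps that analysis entirely. This is a bit cleaner.

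Your worry about $l=2$ is also largely unwarranted within your own framework: once $q\equiv-1\bmod 4$ forces $i^{\sigma}=i^{-1}$, the identity $GG^{\sigma}=d$ is again just the complex $|G|^{2}=d$ read as an equation in $\ZZ[i]$ and reduced modulo an odd prime, so no separate Galois-ring verification is needed beyond what you already sketched for odd $l$. The paper, by contrast, spends its $l=2$ effort on the behaviour of $\sqrt{2}$ under $\sigma$ (splitting on $p\bmod 8$), a question your choice of $\lambda$ has already absorbed.
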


Notice that if~$p^r \equiv -1 \bmod l^k$ then the same is true of~$p^{ar}$ for every odd integer~$a$. 
In other words, for each particular dimension~$d=l^k$, every such prime~$p$ additionally furnishes us with an infinite set of finite fields for which the statement holds. 

The same techniques can be applied to more complicated sets of MUBs: we illustrate this in~\ref{drownsec}. 
Moreover, as illustrated in~\ref{tableau}, whenever we have done an exhaustive search in a prime-power dimension~$d=2,3,4,5,7$ over some finite field, we have found that all \emph{maximal} solutions of~$d+1$ MUBs are \emph{equivalent} --- in the sense of~\cite{BWB} as explained in~\ref{axceetoo} --- to the reduced WF solutions, possibly modified by~$\sqrt{-1}$ as in the proof below. 

On the other hand, for prime powers where the structure in proposition~\ref{mdffpp} does not obtain, we have shown that for~$d\leq7$,~$d+1$ MUBs do \emph{not} exist for the small primes in our searches\footnote{Other than in the strange anomalous case~$d=7$,~$p=3$ in table~\ref{sebben} of~\ref{tableau}, where \emph{a priori} we only know that solutions exist for powers of~$q=p^3 = 27$, by proposition~\ref{mdffpp}. }. 
So on the basis of this tiny body of evidence, the WF construction --- with the slight modifications needed in the proof below --- would seem to be universal for creating maximal sets of MUBs over finite fields. 
See~\cite{babo} for a comparison with the complex case. 

The sub-maximal sets in these defective dimension-prime combinations display behaviour akin to \emph{bad reduction} of abelian varieties; although it is not because of the introduction of mod-$p$ singularities but rather because the Hermitian form collapses. 
In other words, as we observed in~\ref{gendef}, because the Hermitian form is encapsulated (via the Galois action) within the polynomial system, we are forced to use a different set of `real' equations in these finite fields from that in the complex case or indeed in the finite fields in proposition~\ref{mdffpp}. 
It is not clear what the `geometric' meaning of the solutions is when there is a square root of~$-1$ in the ground field, as is the case for example in proposition~\ref{quiver}.

Given any field~$F$ and any~$N\in\NN$, let~$\zeta_N\in\overline{F}$ denote a fixed primitive~$N$-th root of unity in an algebraic closure~$\overline{F}$ of~$F$ and let~$\mu_N\ =\ <\zeta_N>$ be the group it generates. 
Similarly the symbol~$\sqrt{T}$ will denote a fixed square root of a field element~$T$ inside an algebraic closure. 
A reference for the facts we use on finite fields is~\cite{serre}. 

\begin{proof}[Proof of proposition~\ref{mdffpp}]
We show by construction that the upper bound is attained; that it cannot be breached is proposition~\ref{finivan}. 
In~\cite{wootters}, complete sets of~$l^k+1$ MUBs $\{\mathcal{B}_0,\mathcal{B}_1 ,\ldots,,\mathcal{B}_{l^k} \}$ are constructed in~$\CC^{l^k}$, where~$\mathcal{B}_0$ is the {computational basis} represented by the identity matrix. 
All vector entries in the other~$\mathcal{B}_k$ are of the form~$\frac{\zeta_l^t}{\sqrt{l^k}}$ for some~$t \in \ZZ$ when~$l$ is odd; and~$\frac{i^t}{\sqrt{2^k}}$ when~$l=2$, where~$i=\sqrt{-1}$. 

Suppose first that~$l$ is an odd prime. 
The cyclic Galois group~$\Gal{\QQ(\zeta_l)}{\QQ}$ has a unique involution~$\tau$ whose inversion action on~$\mu_l$ is identical to that of the restriction of complex conjugation. 
On the other hand~$\sqrt{l}$ is real and so it is fixed under complex conjugation. 
It follows that the conjugation action on the vector entries in~$\{\mathcal{B}_0,\mathcal{B}_1 ,\ldots,,\mathcal{B}_{l^k} \}$ is captured entirely by~$\tau \colon \zeta_l \mapsto \zeta_l^{-1}$. 

We have assumed that~$q \equiv -1 \bmod l^k$ and~$l\neq2$, hence $\FF^\times_q  \cong  \mu_{q-1}$ contains no~$l$-th roots of unity other than~$1$; whereas~$\FF^\times_{q^2}  \supset \mu_{q+1} \supset \mu_{l^k} \supset \mu_l$. 
In particular,~$\FF_{q^2}  =  \FF_q(\zeta_l)$ and the $q$-th power Frobenius automorphism~$\sigma$ in the extension~$\FF_{q^2} / \FF_q$ acts on the~$l$-th roots of unity via inversion, since~$\zeta_l^\sigma = \zeta_l^q = \zeta_l^{-1}$. 
It follows that any solution to the MUB equations in~$\frac{1}{\sqrt{l^k}}\ZZ(\zeta_l)$ will satisfy the very same equations down in~$ \FF_{q^2}$, {provided that}~$\sigma$ fixes the square root of~$l^k$. 

If~$k$ is even then~$l^k$ is a square integer; and if~$r$ is even then by the uniqueness of the quadratic extension of~$\FF_p$ the square root of~$l\in\FF_p$ must already be contained in~$\FF_q$. 
In such cases the norm~$\norm{\FF_{q^2}}{\FF_q}{\sqrt{l^k}}  =  l^k$ and consequently the behaviour exactly mimics that in the complex case.  

Note also the trivial fact that when~$p=2$, so~$l$ is in fact forced to be odd, the square root of~$l^k$ --- that is,~$1$ --- is automatically in the ground field. 

However if all of~$p$,~$l$,~$k$ and~$r$ are odd it is possible that~$l^k$ will not be a square in the base field~$\FF_q$: hence~$\sigma$ maps~$\sqrt{l^k} \mapsto -\sqrt{l^k}$. 
In such cases every instance of~$\sqrt{l^k}$ in the WF example must be twisted by some element~$\nu \in \FF_{q^2}$ whose norm is~$-1$. 
But the norm map from~$\FF_{q^2}^\times$ to~$\FF_q^\times$ is surjective and so there are exactly~$q+1$ elements of~$\FF_{q^2}^\times$ which map onto~$-1$. 
Choose one such~$\nu \in \FF_{q^2}^\times$ and multiply all of the WF vector entries by~$\nu$, ignoring~$\mathcal{B}_0$ of course; then once again~$\norm{\FF_{q^2}}{\FF_q}{\nu\sqrt{l^k}}  =  l^k$ and the set of equations derived from the geometry in~$\CC^{l^k}$ is satisfied.

When~$l=2$, so~$p$ is odd, all of the coefficients of the WF MUBs in dimension~$2^k$ are of the form~$\frac{i^a}{\sqrt{2^k}}$ for some~$a\in\ZZ$. 
In a finite field of odd characteristic, we need the Frobenius action~$\sigma$ on the fourth roots of unity and on~$\sqrt{2}$, where relevant, to mimic that in~$\CC/\RR$: namely, as inversion on~$i = \sqrt{-1}$ and as the identity on~$\sqrt{2}$. 
This rules out~$q \equiv 1 \bmod 4$, since then either~$p \equiv 1 \bmod 4$ or~$r$ is even: either or both of which would ensure that~$-1$ is a square in~$\FF_q$. 
So we are forced into the situation in the hypotheses of the proposition, where~$p \equiv 3 \bmod 4$ and~$r$ is odd, which is equivalent to~$q \equiv 3 \bmod 4$. 

Hence all remaining possibly problematic cases for~$l=2$ reduce to~$k$ odd,~$r$ odd,~$p \equiv 3 \bmod 8$ or~$p \equiv 7 \bmod 8$.
When~$p \equiv 7 \bmod 8$,~$\leg{2}{p} = 1$ and so the WF complex solutions behave identically down in~$\FF_{q^2}$. 

When~$p \equiv 3 \bmod 8$, however, the Legendre symbols~$\leg{-1}{p} = \leg{2}{p} = -1$ and so both~$\sqrt{-1}$ and~$\sqrt{2}$ are mapped by~$\sigma$ to their negatives. 
The net action of~$\sigma$ is therefore to fix entries of the form~$\frac{\pm i}{\sqrt{2}}$ and to flip the signs on the `real' entries~$\frac{\pm1}{\sqrt{2}}$. 
So we need once again to compensate by adding in a factor whose norm is~$-1$, and we follow the same procedure as above. 
\end{proof}

\begin{remark}
Under the~\emph{Alltop construction}~\cite{alltop,klaro,blanch} the complete set of~$d+1$ MUBs may also be constructed\footnote{For odd primes~$l$ or~$d=2,4$; however this is apparently still an open problem when~$d = 2^k$ for~$k\geq 3$. See~\cite{blanch} for the case~$d=4$ and for the history of the problem. } as the images of a~\emph{fiducial vector} under the action of a \emph{Weyl-Heisenberg} or \emph{generalised Pauli} group of degree~$d$, just as in the cousin problem of SIC-POVMs~\cite{ing}. 
As in the proof above, all of the matrix entries as well as those of the fiducial vector are drawn from~$\frac{1}{\sqrt{l^k}}\ZZ(\zeta_l)$; and the only group action relevant to the MUB properties is that of complex conjugation. 
Hence with similar modifications, the construction carries over faithfully to the finite field context. 
\end{remark}

\subsection{Values of~$\mdf{d}{\FF_{q^2}}$ when $d$ is not a prime power}

\begin{corollary}\label{drown}
Write $d=\prod_{j=1}^n l_j^{k_j}$ as a product of its prime power factors, ordered so that $l_1^{k_1}<l_2^{k_2}<\ldots<l_n^{k_n}$. 
Then there is a set of primes~$p$ of positive Dirichlet density together with integers~$r_p\geq1$ such that for each such pair, writing~$q = p^{r_p}$: 
$$
\mdf{d}{\FF_{q^2}}  \geq  {l_1^{k_1}+1} .
$$
\end{corollary}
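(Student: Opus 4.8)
The plan is to transport the tensor-product lower bound for $\mdf{d}{\CC}$ recalled in~\ref{statknow} into the finite-field setting. The one genuinely new ingredient is that I must realise maximal sets of MUBs for \emph{all} of the prime-power factors $l_j^{k_j}$ of $d$ simultaneously inside a \emph{single} field $\FF_{q^2}$, which is exactly what the Dirichlet-density clause in the statement buys me.

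First I would isolate the arithmetic condition that makes proposition~\ref{mdffpp} fire for every factor at once. Put $\epsilon = 1$ if $d$ is even and $\epsilon = 0$ otherwise, and set
$$N = 4^{\epsilon}\prod_{j\,:\,l_j\neq2} l_j^{k_j}.$$
I would then take $r_p = 1$ and restrict to primes in the class $p \equiv -1 \bmod N$, writing $q = p$. Since $\gcd(-1,N)=1$, Dirichlet's theorem on primes in arithmetic progressions furnishes such primes with density $1/\phi(N) > 0$. For any such $p$ one has $q \equiv -1 \bmod l_j^{k_j}$ for each odd $l_j$ and $q \equiv -1 \bmod 4$ whenever $2 \mid d$; moreover $p \equiv -1 \bmod l_j$ forces $p \neq l_j$, so the hypotheses of proposition~\ref{mdffpp} are met for every factor. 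That proposition then provides, for each $j$, a maximal family $\{\mathcal{B}^{(j)}_0,\ldots,\mathcal{B}^{(j)}_{N_j}\}$ of $N_j+1$ MUBs in $\FF_{q^2}^{\,l_j^{k_j}}$, where I write $N_j = l_j^{k_j}$.

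Next I would assemble these factorwise families by the tensor-product recipe. Identifying $\FF_{q^2}^{\,d}$ with $\bigotimes_{j=1}^{n}\FF_{q^2}^{\,l_j^{k_j}}$ carries the standard orthonormal basis of the former to the tensor products of the standard bases of the factors, and under this identification the Hermitian form factorises as $\langle \bigotimes_j \mathbf{u}^{(j)}, \bigotimes_j \mathbf{v}^{(j)}\rangle = \prod_{j=1}^{n}\langle \mathbf{u}^{(j)}, \mathbf{v}^{(j)}\rangle$. Because the factors are ordered so that $l_1^{k_1}$ is smallest, for each index $a \in \{0,1,\ldots,l_1^{k_1}\}$ I can form the basis $\mathcal{B}_a = \bigotimes_{j}\mathcal{B}^{(j)}_a$; each is orthonormal since a tensor product of orthonormal bases is orthonormal. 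For $a \neq b$ and any $\mathbf{u} \in \mathcal{B}_a$, $\mathbf{v} \in \mathcal{B}_b$, multiplicativity of the field norm together with the factorisation above yields
$$\norm{\FF_{q^2}}{\FF_q}{\langle \mathbf{u},\mathbf{v}\rangle} = \prod_{j=1}^{n}\norm{\FF_{q^2}}{\FF_q}{\langle \mathbf{u}^{(j)},\mathbf{v}^{(j)}\rangle} = \prod_{j=1}^{n}\frac{1}{l_j^{k_j}} = \frac{1}{d},$$
so $\mathcal{B}_a$ and $\mathcal{B}_b$ are mutually unbiased. This exhibits $l_1^{k_1}+1$ pairwise MUBs in $\FF_{q^2}^{\,d}$, which is the asserted bound.

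The delicate step is the synchronisation in the first paragraph: proposition~\ref{mdffpp} must be triggered over one and the same $\FF_{q^2}$ for every prime-power factor, so the single congruence $p \equiv -1 \bmod N$ has to meet the odd-$l$ condition modulo each $l_j^{k_j}$ and the $l=2$ condition modulo $4$ at once, while still cutting out a class of positive density that avoids the finitely many $l_j$. The tensor-product half is then essentially formal, resting only on multiplicativity of $\norm{\FF_{q^2}}{\FF_q}{\ \cdot\ }$ and on the factorisation of the Hermitian form, exactly as in the classical complex construction.
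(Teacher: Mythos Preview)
Your proof is correct and follows essentially the same route as the paper: use Dirichlet to find primes meeting the hypotheses of proposition~\ref{mdffpp} simultaneously for every prime-power factor of~$d$, then tensor the resulting maximal families together. Your choice to fix $r_p=1$ and work with the single congruence $p\equiv-1\bmod N$ is a clean specialisation of the paper's more general CRT setup (which in the end also forces $r=1$ whenever $2\mid d$), and you spell out the tensor-product step in slightly more detail than the paper, which simply defers to~\cite{klaro,asch}.
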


\begin{proof}
We need to find a prime power~$q = p^r$ which simultaneously satisfies the hypotheses of proposition~\ref{mdffpp} for every~$l_j^{k_j}$. 
By the Chinese remainder theorem, if we choose a sequence of integers~$a_j \bmod l_j^{k_j}$ for each~$j$ all with the same order --- namely, the value~$2r$ in proposition~\ref{mdffpp} --- then this gives us a unique class~$A \in \left( \ZZ/L\ZZ \right)^\times$, where in principle~$L = \prod_{j}l_j^{k_j}$. 
However, note that if one of the~$l_j$ is~$2$ then we must adjust so that the contribution to~$L$ from the $j$-th prime~$2$ is~$4$,~$a_j = 3$ and~$r=1$ everywhere. 
In this case~$L$ will differ from~$d$. 
By Dirichlet's theorem on primes in arithmetic progressions there are infinitely many primes~$p$ congruent to~$A$ modulo~$L$. 
Such a~$p$ is then by construction congruent to~$a_j \bmod l_j$ for each~$j$. 

Writing $q = p^r$, we now construct for each~$j$ a MUB over~$\FF^{l_j^{k_j}}_{q^2}$ as in the proof of proposition~\ref{mdffpp}. 
We then use the argument in~\cite[lemma~3]{klaro} or~\cite{asch}:  take the tensor product of these component solutions, which yields a set of MUBs in dimension~$d$ over~$\FF_{q^2}$. 
Notice that in this last step we require the condition that the~$r$ be constant, in order to be able to tensor them over the same finite field. 
\end{proof}

\section{Applications of theorem~\ref{nogo}: reduction from $\CC$ to finite fields}
\subsection{Using known solutions in~$\CC$ to search for solutions over~${\FF_{q^2}}$}\label{drownsec}
One way to approach the search for four MUBs in dimension~$6$ is to begin with one of the many known triplets of~MUBs in~$\CC^6$ whose vector entries lie in a complex algebraic number field~$L$ --- see for example~\cite{appingdang} --- and to study its behaviour under reduction at certain admissible primes~$\wp$ of the ring of integers~$\ZZ_L$. 
We then search for more vectors over~${\FF_{q^2}}/{\FF_{q}}$ (where~$q^2 = {\rm N}\wp$) which are MU to them all, and try to lift them back up to the extension~$L_\wp$ of~$\QQ_p$.  
We should point out that we were unable to produce any new evidence this way along the lines of proposition~\ref{quiver}. 

In \cite{dardo}, \cite{ingemar6}, \cite{BWB} etc., many examples are detailed of sets of complex~MUBs in low dimensions: there is also an online classification in~\cite{hadcat}. 
In most cases the entries are given as roots of unity or algebraic numbers of low degree~\cite{appingdang}; hence they may be viewed as lying in some number field~$L$ with a fixed embedding into~$ \CC$. 

Choosing a prime~$\wp \lvert p$ of~$L$ at which the vector entries are $\ZZ_{L_\wp}$-units, the key requirement once again is that the action of the~$q$-th power Frobenius map upon the vector entries reduced modulo~$\wp$, should mimic that of complex conjugation in~$\CC/\RR$.  
For example, it must act via inversion upon any roots of unity not in the base field.  
This is an unavoidable clash with the situation over~$\CC$: for a prime power~$q>3$ the~$(q-1)$-th roots of unity other than~$\pm1$ are not real; whereas they do lie in~${\FF_{q}}$ and so the~$q$-power map fixes them. 
Indeed, the more complicated the Galois action upon the entries, the higher we have to go in general, to find each prime at which Frobenius precisely replicates complex conjugation. 

Locally~$L_\wp$ should be the unramified quadratic extension~$L_\wp / K_\mathfrak{p}$ of a base field~$K_\mathfrak{p}$, writing~$\wp \lvert \mathfrak{p} \lvert p$, such that the action of the non-trivial (Frobenius) element~$\sigma \in \Gal{L_\wp}{K_\mathfrak{p}}$ replicates that of complex conjugation on the $L$-entries of the original matrix. 
Let~$q = N\mathfrak{p}$ denote the absolute (field-theoretic) norm of~$\mathfrak{p}$: so~${\FF_{q}} \cong \ZZ_{K_\mathfrak{p}} / \mathfrak{p}$ is the residue field of the ring of integers~$\ZZ_{K_\mathfrak{p}}$ of~$K_{\mathfrak{p}}$, with~$\FF_{q^2} \cong \ZZ_{L_\wp} / \wp$ the residue field of the ring of integers~$\ZZ_{L_{\wp}}$ of~$L_{\wp}$, and order~2 Galois group~$\Gal{\FF_{q^2}}{\FF_q} \cong \Gal{L_\wp}{K_\mathfrak{p}}$.

\subsection{Examples of MUBs in~$\CC^6$ reduced modulo~$p$: The matrices $H_1, D(0)$ from~\cite[\S4]{dardo}}\label{redux}
We now illustrate this methodology with a couple of simple examples. 
These matrices require the $24$-th roots of unity together with an algebraic number~$b_2 = \frac{-1+2i}{\sqrt{5}}$, where~$i^2 = -1$. 
Note that~$b_2$ is a~$p$-adic unit for all~$p\neq 5$. 
So we need a prime~$p$ which allows~$\mu_{24} \subseteq \FF_{q^2}  \setminus \FF_q$ and which has a square root of~$5$ in~$\FF_q$ but not a square root of~$-1$. 
That is,~$p \equiv -1 \bmod 24$, $\leg{5}{p} = 1$: and~$p \equiv 3 \bmod 4$ automatically by the first condition. 
Of the first few primes~$23,47,71$ satisfying~$p \equiv -1 \bmod 24$, only~$p=71$ contains~$5$ as a quadratic residue. 
For~$q=p=71$ in the notation of \cite{dardo}, the bases were as follows.  
Writing~$\FF_{q^2}^\times = <\gamma> \cong C_{71^2-1}$ and~$u = \gamma^{70}$ for a generator of the subgroup of elements of~$\FF_{q^2}$ of norm~$1$, and~$\delta$ for a fixed choice of element of~$\FF_{q^2}$ of norm~$1/d$ as explained in~\ref{dumbdown}: 

\tiny
\[
H_1 = \delta\left( 
\begin{array}{cccccc}
1 		& 1 		& 1 		& 1 		& 1 		& 1 		\\
u^{54} 	& u^{54} 	& u^6 	& u^{30} 	& u^{30} 	& u^6 	\\
u^{71}	& u^{35} 	& u^{27}	& u^{63} 	& u^{27} 	& u^{63}	\\
u^{54}	& u^{54} 	& u^{30} 	& u^6  	& u^{6} 	& u^{30}	\\
u^{35}	& u^{71} 	& u^{39} 	& u^{51} 	& u^{15}  	& u^{3}	\\
u^{35}	& u^{71} 	& u^{15} 	& u^{3} 	& u^{39} 	& u^{51}
\end{array}
\right)
;\ \ 
D(0) = \delta\left( 
\begin{array}{cccccc}
1 		& 1 		& 1 		& 1 		& 1 		& 1 		\\
1 		& u^{36}	& u^{18} 	& u^{54} 	& u^{54} 	& u^{18} 	\\
1 		& u^{18} 	& u^{36}	& u^{18} 	& u^{54} 	& u^{54}	\\
1 		& u^{54} 	& u^{18}	& u^{36}	& u^{18}	& u^{54}	\\
1 		& u^{54} 	& u^{54} 	& u^{18}	& u^{36}	& u^{18}	\\
1 		& u^{18}	& u^{54} 	& u^{54} 	& u^{18}	& u^{36}
\end{array}
\right).
\]
\normalsize

\section*{Acknowledgements}
Initially phrasing it as a problem on Gr\"obner bases, Al~Kasprzyk and~GM began studying MUBs over finite fields in~2013. 
It quickly became apparent that the only computationally viable approach was an exhaustive search through the vectors themselves. 
This paper is the result of the faster technology now available~\cite{magma}. 
In addition to Al Kasprzyk for helping to get the whole project running, we are grateful to Kevin Buzzard, Ian Grojnowski and Stefan Weigert for helpful conversations during the early stages, and likewise to Markus Grassl. 
A particular vote of thanks is due to Mike Harrison and Ingemar Bengtsson for many key observations, and helpful comments on an earlier draft of this paper. 

GM thanks Myungshik Kim and Terry Rudolph of the QOLS group at Imperial College for their continuing hospitality. 
AT would like to express his gratitude to the \emph{Crankstart Internships} Programme at the University of Oxford for partially funding his work during the project.

\appendix
\section{The searches}\label{sirch}
\subsection{Theorem~\ref{nogo} and proposition~\ref{quiver}}
\subsubsection{Proof of theorem~\ref{nogo}: limits on MUBs in finite fields}\label{tableau}
For a field~$F$ and dimension~$d$, suppose that~$\mdf{d}{F} = n$. 
As a measure of how close we can get to constructing a set of $n+1$ MUBs in $F^d$, we denote by $\nu_dF$ the maximal size, over all sets $\{\mathcal{B}_0,\ldots,\mathcal{B}_{n-1}\}$ of $n$ MUBs in $F^d$, of an orthonormal set of vectors which are mutually unbiased to each~$\mathcal{B}_k$. 
Here are results of the exhaustive searches for the case~$d=6$. 
\small
\begin{table}[h]
\begin{center}
\begin{tabular}{ |c||c c c c c c c c c c c c c c c c|}
\hline
 $q$ &5&7&11&13&17&19&23&25&29&31&37&41&43&47&49&53\\
\hline
$\mdf{6}{\mathbb{F}_{q^2}}$ &3&3&3&2&3&3&3&2&3&3&2&3&3&3&2&3\\
\hline
$\nu_6(\mathbb{F}_{q^2})$ &4&0&0&2&4&0&0&4&4&0&2&4&0&0&2&4\\
\hline
\end{tabular}
\caption{\small Exhaustive search results in 6 dimensions.}\label{table6}
\end{center}
\end{table}\normalsize

\vspace{-5mm}
%Data in prime-power d -- alignment of searches with complex case
As stated in~\ref{statknow},~$\mdf{d}{\CC} = d+1$ for prime powers~$d$; and it is a corollary of the upper bound proof in proposition~\ref{finivan} that~$\nu_d(\mathbb{C})=0$. 
Further, in each of the cases $d=2,3,4,5$ it is known that there is precisely one set of $d+1$ MUBs up to equivalence~\cite{BWB}. 
These facts were paralleled in the results of the finite field searches wherever full MUB sets were found to exist.

\small
\begin{table}[htb]
\begin{center}
\begin{tabular}{|c|c|c|c|c|}
\hline
 $d$ & Full $d+1$ MUBs & Partial Sets & Holds For\\
\hline
$2$ & $q \equiv 3 \pmod{4}$ & $\mdf{2}{\mathbb{F}_{q^2}}=2$ for $q \equiv 1 \bmod{4}$ & {\rm all odd}\ $q$ \\%<6000$ \\
\hline
$3$ & $q \equiv 2 \pmod{3}$ & $\mdf{3}{\mathbb{F}_{q^2}}=1$ for $q \equiv 1 \bmod{3}$ & $(3,q) = 1$ \\%<4000$ \\
\hline
$4$ & $q \equiv 3 \pmod{4}$ & $\mdf{4}{\mathbb{F}_{q^2}}=3$ for $q \equiv 1 \bmod{4}$ & $q<240$, $q$ {\rm odd} \\
\hline
$5$ & $q \equiv 4 \pmod{5}$ & $\mdf{5}{\mathbb{F}_{2^2}}=4$; $\mdf{5}{\mathbb{F}_{q^2}}=1$ for $q \equiv 1 \bmod 5$; & $q<122, (5,q)=1$ \\
&&  $\mdf{5}{\mathbb{F}_{q^2}}=3$ for $q \equiv  2,3  \bmod 5, q\neq2$ &\\
\hline
\end{tabular}
\caption{\small Exhaustive search results for $d\leq5$.}\label{teatoo}
\end{center}
\end{table}
\normalsize
Table \ref{teatoo} is valid for the ranges of~$q$ shown. 
The assertions for~$d=2,3$ are a straightforward consequence of writing out the matrices with variables, incorporating the protocol detailed in~\ref{rear} and using proposition~\ref{mdffpp}.  
In all cases where $\mdf{d}{\mathbb{F}_{q^2}} > 1$, we find that $\nu_d \mathbb{F}_{q^2} = 0$.

Finally, our limited results for $d=7$ are summarised in the following table. 
\small
\begin{table}[htb]
\begin{center}
\begin{tabular}{ |c||c c c c c c c c c|}
\hline
 $q$ &2&3&4&5&8&9&11&13&17\\
\hline
$\mathcal{M}_7(\mathbb{F}_{q^2})$ &4&8&1&2&4&3&3&8&2\\
\hline
$\nu_7(\mathbb{F}_{q^2})$ &3&0&0&3&3&0&0&0&3\\
\hline
\end{tabular}
\caption{\small Exhaustive search results in 7 dimensions.}\label{sebben}
\end{center}
\end{table}
\normalsize

%MUBs hyperbolique
\subsubsection{Proof of proposition~\ref{quiver}: sets of three hyperbolic MUBs with a fourth set of four MU vectors}\label{quills}
As may be seen in table~\ref{table6}, for each prime~$p \equiv 5 \bmod 12$, we found in dimension~$6$ a set of $3$ MUBs alongside a further set of $4$ orthonormal vectors MU to each MUB. 
The first basis is always taken to be~$\mathcal{B}_0$. 

Lifting the solutions~$p$-adically it became evident that they are perfectly general in characteristic zero: and may be regarded for example as existing in the extension~$K(\sqrt{3}) / K$, where~$K = \QQ(i,\sqrt{6})$ and the Hermitian action is via the map~$\sigma \colon \sqrt{3} \mapsto -\sqrt{3}$. 
When~$q \equiv 5 \bmod 24$,~$\frac{1}{\sqrt{6}}$ is fixed under~$\sigma$; whereas when~$q \equiv 17 \bmod 24$ we need an extra factor of~$i=\sqrt{-1}$ for each vector as in the proof of proposition~\ref{mdffpp}. 
So in particular the following representation reduces well in characteristic~$5\bmod24$: 
$$ 
\tiny
\frac{1}{\sqrt{6}}
\begin{pmatrix}
1	&	1 		&	1 		&	1 		&	1		&	1		\\
1	& -2+\sqrt{3} 	&	-1		&2-\sqrt{3}	&-2+\sqrt{3}	&2-\sqrt{3}	\\
1	&	1 		&-2-\sqrt{3}	&	1		&	1 		&-2+\sqrt{3}	\\
1	& -2-\sqrt{3}	& 2+\sqrt{3} 	&	-1	 	&	1 		&	-1		\\
1	&	1		&-2-\sqrt{3}	&-2+\sqrt{3}	&	1 		&	1		\\
1	&	1 		&2+\sqrt{3}	&	-1		&-2-\sqrt{3} 	&	-1		\\
\end{pmatrix}, $$
$$
\tiny
\frac{1}{\sqrt{6}}
\begin{pmatrix}
1			&	1		&	1 		&	1 		& 1 			& 1			\\
1 			& 	1  		& 	-1 		& -2+\sqrt{3} 	&-1 			& 2-\sqrt{3}	\\
1 			& 	1  		& 	1		& 	1  		& -2-\sqrt{3} 	& -2+\sqrt{3}	\\
1 			&-2-\sqrt{3}	&2+\sqrt{3} 	& -2-\sqrt{3} 	& 2+\sqrt{3} 	& -1			\\
1 			&	1		&-2-\sqrt{3} 	& 	1 		& 1 			& -2+\sqrt{3}	\\
-2-\sqrt{3}		&	1  		&2+\sqrt{3} 	&-2-\sqrt{3} 	& 2+\sqrt{3} 	& -1			\\
\end{pmatrix}.$$
The four vectors, also arranged as columns in a matrix, are: 
$$
\tiny
\frac{1}{\sqrt{6}}
\begin{pmatrix}
1				&	1				&	1 			&	1 				\\
\omega^2(2-\sqrt{3})	&-\omega^2(2-\sqrt{3})	&-\omega(2-\sqrt{3})	&\omega(2-\sqrt{3})		\\
\omega 			&\omega				&\omega^2		&\omega^2			\\
-\omega			&\omega				&\omega^2		&-\omega^2			\\
\omega^2			&\omega^2			&\omega		 	&\omega				\\
-1				&	1  				&1				&-1					\\
\end{pmatrix},$$
\normalsize
where~$\omega$ is a primitive cube root of unity. 
Note that~$\sigma \colon \omega \mapsto \omega^2$.

\subsection{The search algorithms}\label{simp}
The exhaustive computer algorithm search routine is essentially self-explanatory. 
We just give a few basic background facts. 
The starting point in our sets of MUBs is always~$\mathcal{B}_0=\{\mathbf{e}_1,\ldots,\mathbf{e}_d\}$, represented by the identity matrix. 

\subsubsection{Formal equivalence of notions of MU in $\CC$ and $\FF_{q^2}$}\label{axcee}
Fix an odd prime~$p$. 
The definition in~(\ref{mubness}) is formally aligned with that in the complex case: the square norm of the `absolute value' of the Hermitian inner product of each pair of vectors from distinct bases must equal~$1/d$.  
Hence our stipulation that~$(d,p)=1$. 
It is also somewhat meaningless to have~$d$ reinterpreted modulo the characteristic~$p$. 
For example, if~$p=5$ and~$d=6$, then every unit vector is `MUB to itself'. 
So ideally, we insist further that~$d<p$. 

Fix~$q=p^r$ for some~$r\geq1$: we always work over the extension ${\FF_{q^2}} / {\FF_{q}}$ with Galois group generated by the Frobenius element~$\sigma:x\mapsto x^q$ of order~$2$. 
Since the quadratic extension of a finite field is unique up to isomorphism, we are assured of the existence in~${\FF_{q^2}}$, if not already in~${\FF_{q}}$, of a square root of~$d$.

As an aside, however, we illustrate a major difference between our situation --- where we are in the end searching for a solution in a number field --- and those in~$\RR$ or~$\CC$ where the base field always contains~$\sqrt{d}$. 
Let~$d>1$ be a non-square positive integer. 
Consider the corresponding Hermitian structure which arises via the Galois group~$\Gal{\QQ(\sqrt{d})}{\QQ}$.   
Whereas in~$\RR$ or~$\CC$ we would have~$N(1/\sqrt{d}) = (1/\sqrt{d})^2 = 1/d$, here instead~$N(1/\sqrt{d}) = (1/\sqrt{d})(-1/\sqrt{d}) = -1/d$. 
This is the context in which the \emph{hyperbolic MUBs} arise in proposition~\ref{quiver} and~\ref{quills}.

\subsubsection{Hadamard matrices}\label{axceetoo}
Now let~$F/K$ be any quadratic extension of fields and let~$V$ be a {unitary space}. 
Given any matrix operator~$M$ on~$V$, let~$M^\dag$ denote the Hermitian conjugate transpose of~$M$: so in particular~$\mathcal{B}$ represents an orthonormal basis if and only if its associated matrix~$\mathcal{B}$ is unitary; which in turn is true iff~$ \mathcal{B}^\dag \mathcal{B} = \mathcal{B} \mathcal{B}^\dag = \mathcal{B}_0$. 
In view of the fact that much of the literature on this subject is phrased in these terms --- see for example~\cite{ingemar6, hadcat, dardo} --- we note that a matrix~$\mathcal{B}$ which is MU to~$\mathcal{B}_0$ is often called an~\emph{$F$-Hadamard matrix} --- or where the context is clear just a~\emph{Hadamard matrix} --- drawing upon the terminology in the complex case\footnote{Using the convention of Bengtsson~\emph{et al} in~\cite{ingemar6}, in that Hadamard matrices are~\emph{unitary}, as opposed to~\emph{unimodular} as in say~\cite{hadcat}.  
These differ merely by a one-off normalisation by~$\tfrac{1}{\sqrt{d}}$.}.   

A second basis~${\mathcal{C}}$ represented by another~$F$-Hadamard matrix is then in turn MU to~${\mathcal{B}}$ iff the ordinary matrix product~${{\mathcal{B}}}^\dag {{\mathcal{C}}}$ is also itself an~$F$-Hadamard matrix. 
Indeed, the condition in definition~\ref{mubdef} is identical to requiring that every entry~${ \langle \mathbf{b}_i , \mathbf{c}_j \rangle }$ of the unitary matrix~${\mathcal{B}}^\dag {\mathcal{C}}$ of pair-wise Hermitian dot products have that norm. 
Since for any unitary matrix~$\mathcal{B}$ and any matrices $\mathcal{C}$, $\mathcal{D}$ it is the case that $(\mathcal{B}^\dag\mathcal{C})^\dag (\mathcal{B}^\dag\mathcal{D})  =  \mathcal{C}^\dag\mathcal{D}$, including~$\mathcal{B}_0$ in any set of MUBs as the first member is no restriction. 
This then forces all of the entries~${{b}_j} \in F$ of all other basis 
vectors~$\mathbf{b}$ of another MUB~${\mathcal{B}}$, say, to satisfy~$\norm{F}{K}{b_j}  =  1/d$. 
However, although when~${\mathcal{B}}$ and~${\mathcal{C}}$ are MU to one another, the basis associated to~${\mathcal{D}}  =  {{\mathcal{B}}^\dag{\mathcal{C}}}$ is by definition itself MU with respect to~$\mathcal{B}_0$, in most cases~${\mathcal{D}}$ will be MU neither to~${\mathcal{B}}$ nor to~${\mathcal{C}}$.  
Parenthetically, the order in which we take the Hermitian product is important: when~${{\mathcal{B}}^\dag{\mathcal{C}}}$ satisfies our conditions, this nevertheless reveals nothing in particular about the unitary~$ {{\mathcal{B}}{\mathcal{C}}}^\dag$.

\subsubsection{Transformations which preserve the MU properties: Rearrangements of the columns and rows of the MUB matrices}\cite[\S III]{ingemar6}, \cite{debz}.  \label{rear}
Let $\mathcal{S} = \{ \mathcal{B}_1 , \ldots , \mathcal{B}_n \}$ be a set of MUBs in~$F^d$ which is assumed not to include~$\mathcal{B}_0$, but to be MU to it. 
Given any basis matrix~$\mathcal{B}_k \in \mathcal{S}$ we are free to rearrange the columns in any order without affecting the MU properties. 
This amounts to right-multiplying~$\mathcal{B}_k$ by a~$d\times d$ permutation matrix. 
Similarly and separately, we may rearrange the rows of every~$\mathcal{B}_k$ for~$1\le k \le n$,~\emph{provided that we do an identical rearrangement on all~$n$ matrices simultaneously}. 
This may be effected by left-multiplying them all by \emph{the same} permutation matrix. 

By analogy with the real case, two pairs of vectors ${\mathbf{u}},{\mathbf{v}}$ and~${\mathbf{u'}},{\mathbf{v'}}$ subtend the same `angle' if and only if~$\norm{F}{K}{\langle \mathbf{u} , \mathbf{v} \rangle}  =  \norm{F}{K}{\langle \mathbf{u'} , \mathbf{v'} \rangle}$. 
This angle is unchanged if we multiply~${\mathbf{u}}$ or~${\mathbf{v}}$ or both by elements of~$F$ of norm~1. 
Regarding each~$\mathcal{B}_k$ separately therefore we are free to right-multiply any or all of them by (possibly) different diagonal matrices of norm~1 elements. 
So in particular, we may assume without loss of generality that the first entry in every vector of every one of the~$\mathcal{B}_k$ is equal to the same value, which we choose to be the quantity~$\delta$ below, where~$\norm{F}{K}{\delta} = 1/d$. 

Equally, we may multiply the rows of the bases by norm~$1$ elements, again provided that we use the same element on each corresponding row of each matrix at the same time. 
This is achieved by left-multiplying every one of the~$\mathcal{B}_k$ by the same diagonal matrix with norm~$1$ diagonal entries. 
In this way, we are able for example to ensure that the first vector of the first basis~$\mathcal{B}_1$ should have all of its entries equal to~$\delta$.  
This was illustrated in the examples in~\ref{drownsec} and~\ref{quills}. 

\subsubsection{Transformations which preserve the MU properties: Reducing the search space from $d$ dimensions to ${d-1}$}\label{dumbdown}
We make a few obvious simplifications. 
The norm homomorphism~$N$ on multiplicative groups of finite fields is surjective, yielding the following short exact sequence of finite groups defining the kernel~${U}$:
$$
1\lra {U}\lra {\FF^\times_{q^2}}\overset{N}\lra{\FF^\times_{q}}\lra1.
$$
If~$\gamma$ is any generator for the multiplicative subgroup~${\FF^\times_{q^2}}$ of~${\FF_{q^2}}$ then~${U} =\ <\gamma^{q-1}>$ is the unique subgroup of~${\FF^\times_{q^2}}$ of index~$q-1$ (equivalently, of order~$q+1$).   
Let~$\delta \in {\FF^\times_{q^2}}$ lie above~$d^{-1}$; then~$\delta U$ is the coset containing all elements of norm~$d^{-1} = N\delta = \delta^{q+1}$. 

By setting the first basis to be~$\mathcal{B}_0$, we may restrict our attention to vectors whose entries lie in~${\delta U}$.   
Indeed we may view~$\left({\delta U}\right)^d=\delta {U}^d$, the cartesian product of~$d$ copies of~${\delta U}$, as a sub\emph{set} of~${\FF^d_{q^2}}$ and focus our search within it.  
In accordance with the previous section, we are free to choose the first vector of our first new basis~$\mathcal{B}_1$ to consist of all~$\delta$'s; and moreover the first element of each of our non-$\mathcal{B}_0$ vectors also may be chosen to be~$\delta$.      

Just as with the situation over the complex extension of the reals, for each vector entry we require two variables over~$\FF_q$ to constitute just one over~$\FF_{q^2}$, so that the Frobenius map may be invoked separately all the way up the~$p$-adic tower. 
So we need ten variables in~$\FF_q$ for each vector in our search.

\section{The equations and reduction mod~$p$}
We give a very brief outline of some facts which arise when viewing this problem through the lens of algebraic geometry. 
Throughout we assume we are in the context of a unitary space over a field~$F$, where~$F/K$ is a quadratic field extension as in the text. 

In order for a basis~$\mathcal{C}$ to be MU to an ON basis~$\mathcal{B}$, it must satisfy three separate uniform sets of highly symmetric equations: (I) equality to~$\frac{1}{d}$ of the norm of each vector entry, to prove that the vectors are MU to the computational basis~$\mathcal{B}_0$ --- which also forces them to be unit vectors; (II) orthogonality among themselves; and (III) mutual unbiasedness to~$\mathcal{B}$.

\subsection{The individual defining polynomials reduced modulo a prime}
Zauner's conjecture predicts that the ideal~$\mathcal{J}$ of the polynomial ring~$\ZZ[\underline{X}]$ 
generated by the~$261$ multivariate quartic and quadratic polynomials in~$N=216$ real variables defining four MUBs in~$\CC^6$, should be the whole ring~$\ZZ[\underline{X}]$. 
This is a Gr\"obner basis calculation way beyond the power of any known algorithm. 
In particular, good reduction modulo a prime is unprovable, other than by inspection for the primes~$2$ and~$3$. 
So although the following result puts the reduction behaviour of the \emph{individual} MUB defining polynomials into context, it nevertheless says nothing about the multiplicity of reduced solutions in their intersection varieties. 
See~\ref{eqnshape} for the actual polynomials for~$d=2$: the general case is entirely analogous. 

\begin{proposition}
The defining polynomials for a set of four MUBs over the complex numbers in dimension~$6$ are each absolutely irreducible. 
Consequently each polynomial is guaranteed to have good reduction outside a finite set of primes. 
\end{proposition}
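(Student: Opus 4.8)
The plan is to exploit the fact that, after writing each complex entry $z=x+iy$ in real and imaginary parts, the $261$ defining polynomials fall into only two algebraic shapes. The entry-norm conditions (I) of type $x^2+y^2-\tfrac1d$ and the intra-basis orthogonality conditions (II) such as $\sum_j(a_jc_j+b_jd_j)=0$ are all \emph{quadrics}, while the inter-basis unbiasedness conditions (III) all have the \emph{quartic} form $P^2+Q^2-\tfrac1d$, where $P$ and $Q$ are the real and imaginary parts of a Hermitian dot product. I would treat the two shapes in turn.

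For the quadrics I would invoke the classical rank criterion: a quadratic polynomial over a field is absolutely irreducible exactly when the symmetric matrix of its homogenisation has rank at least $3$. A type-(I) polynomial homogenises to a rank-$3$ form, while a type-(II) orthogonality polynomial is a sum of hyperbolic planes in pairwise disjoint variables and so has full rank (here $24$); in both cases the rank comfortably exceeds $3$. Since each such polynomial involves only a proper subset of the $216$ variables and is irreducible in those, it remains irreducible in the full ring $\CC[\underline X]$.

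The quartic (III) is where the real work lies, and I expect it to be the main obstacle. Over $\CC$ one factors $P^2+Q^2=(P+iQ)(P-iQ)$, and a $\CC$-linear change of variables (taking $\bar b_j,\,b_j$ and $c_j,\,\bar c_j$ as new linear forms) exhibits this as $A\cdot B$, where $A=\sum_j\beta_j\gamma_j$ and $B=\sum_j\beta_j'\gamma_j'$ are two nondegenerate quadratic forms in \emph{disjoint} variable sets. One must then prove $AB-\tfrac1d$ irreducible, and here care is essential, since $AB-c$ is \emph{not} irreducible in general (for example $u^2v^2-1=(uv-1)(uv+1)$). I would view $AB-\tfrac1d$ as a polynomial in the variables of $A$ over the fraction field $F$ of the ring generated by the variables of $B$; it is primitive because its constant term $-\tfrac1d$ is a unit, so by Gauss's lemma its irreducibility reduces to that of $A-\tfrac{1}{dB}$ over $F$. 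This is an affine quadric whose quadratic part $A$ has rank $12$, hence whose homogenisation has rank $13\ge 3$ and is absolutely irreducible. It is precisely the nondegeneracy of $A$ — as opposed to its being a perfect square, as in the $u^2v^2$ example — that rescues the argument.

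For the stated consequence I would pass from algebra to geometry. An absolutely irreducible $f\in\QQ[\underline X]$ cuts out a geometrically integral hypersurface in affine space over $\QQ$. Clearing the denominator $d=6$ and spreading out over $\ZZ[\tfrac16]$, the set of primes at which the fibre of $\{f=0\}$ remains geometrically integral is open — by the constructibility of the geometrically-integral locus (EGA IV) — and contains the generic point, since the generic fibre is geometrically integral; but an open subset of $\mathrm{Spec}\,\ZZ[\tfrac16]$ containing the generic point omits only finitely many closed points. Equivalently one may cite Noether's theorem that absolute irreducibility in bounded degree is detected by the non-vanishing of finitely many integer ``Noether forms'' in the coefficients, the bad primes then being those dividing the relevant forms together with those dividing $d$. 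Either route yields good reduction away from a finite set of primes.
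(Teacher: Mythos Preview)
Your argument is correct, and it takes a genuinely different route from the paper's. The paper proceeds computationally: for each shape of defining polynomial $f$ it picks a prime $p>3$, verifies with MAGMA that $f\bmod p$ is irreducible over $\FF_p$, exhibits a smooth $\FF_p$-point, and then invokes Ragot's criteria to deduce absolute irreducibility first over $\overline{\FF_p}$ and then, by a degree-preservation lifting, over $\overline{\QQ}$. Your proof, by contrast, never touches a computer: you exploit the explicit structure of the three families of polynomials, dispatching types~(I) and~(II) with the rank-$\ge 3$ criterion for quadrics, and handling the quartic type~(III) by the $\CC$-linear change to conjugate coordinates, the Gauss-lemma reduction of $AB-\tfrac1d$ to the affine quadric $A-\tfrac1{dB}$ over $\CC(\beta',\gamma')$, and the rank criterion once more. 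The step flagging the necessity of nondegeneracy of $A$ (to avoid the $u^2v^2-1$ pitfall) is exactly the crux, and you handle it cleanly.

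What each approach buys: the paper's method is uniform and mechanical --- one routine handles every polynomial regardless of shape, at the cost of relying on a CAS check and an external theorem of Ragot. Your method is self-contained and conceptually transparent, making clear \emph{why} the polynomials are absolutely irreducible, but it requires a separate argument for each shape and the somewhat delicate Gauss-lemma manoeuvre for the quartic. For the second assertion, both you and the paper land on Noether's criterion (you also offer the equivalent EGA~IV constructibility statement), so there is no divergence there.
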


\begin{proof}
Let~$f(X)$ be one of the defining polynomials, where~$X = \underline{X}$ denotes a tuple of variables. 
Choosing some high enough prime~$p>3$ we verified using MAGMA the irreducibility of~$f$ over~$\FF_p$. 
Examples of \emph{simple zeroes} for~$f$ --- that is, solutions with entries in~$\FF_p$ where the Jacobian does not vanish completely --- are then straightforward to construct by hand. 
Using~\cite[thm1]{ragout} it follows therefore that~$f(X)$ is \emph{absolutely irreducible}: that is, it remains irreducible over the algebraic closure~$\overline{\FF_p}$ of~$\FF_p$. 

Finally by theorem~2 of the same paper, since the degree of~$f$ is unchanged under reduction modulo~$p$, we may lift that absolute irreducibility to~$\QQ$. 
That is to say, it is irreducible over~$\overline{\QQ}$. 
The second assertion then follows from Emmy Noether's irreducibility criteria~\cite[\S V]{schmidt}. 
\end{proof}

The known theoretical upper bounds for these bad primes are huge~\cite{ragout}; however in practice the set of bad primes for any known MUB solution is tiny.

\subsection{The shape of the equations in dimension~$d=2$}\label{eqnshape}
We give a representative example only for dimension~$d=2$ for the sake of brevity; the systems scale to higher dimensions in an entirely predictable and uniform way. 
The notation for the vector entries only applies to this section. 

Other than the standard assumption that the first basis of any set always be~$\mathcal{B}_0$, we leave them in their most general format so as to illustrate the inherent symmetries. 
Representing the MUB vectors as usual by columns of a~$d\times d$ matrix, we may establish the system as follows, using~$i$ as a generic symbol for a generator of the quadratic field extension~$F/K$: 
\small
\[
{\mathcal{B}_0} =  \left( 
\begin{array}{cc}
1 		& 0 			\\
0 		& 1 		
\end{array}
\right) ;\ \ 
{\mathcal{B}_1} =  \left( 
\begin{array}{cc}
e+i\epsilon& 	g +i\gamma\\
f + i\phi	& h +i\chi		
\end{array}
\right) ,\ \ 
{\mathcal{B}_2} =  \left( 
\begin{array}{cc}
s+i\sigma		& u +i\mu			\\
t + i\tau		& v +i\nu		
\end{array}
\right) .
\]
\normalsize
We are forced to have two $K$-valued variables per~$F$-valued vector entry, per basis vector: since --- with an eye to lifting eventually to characteristic zero --- we must explicitly build the Hermitian conjugate into the equations. 
Prior to any adjustments along the lines of those in~\ref{rear}, the resulting equations are as follows. 

The roman numerals refer to the introduction to this appendix. 
(I) First, we must ensure that all entries in all vectors have norm~$\frac{1}{d}$, to ensure MUB-ness with the computational basis~$\mathcal{B}_0$. 
This also forces them to be unit vectors. 
In principle this should give~$d$ inhomogeneous quadratic equations per vector, hence~$d^2$ per basis beyond~$\mathcal{B}_0$, viz.: 
\begin{eqnarray*}
\small{
e^2 + \epsilon^2 - 1/2 \hbox{\ , \ \ }
f^2 + \phi^2 - 1/2 \hbox{\ , \ \ }
g^2 + \gamma^2 - 1/2  \hbox{\ , \ \ }
h^2 + \chi^2 - 1/2  \hbox{\ , \ \ } }\\
\small{
s^2 + \sigma^2 - 1/2  \hbox{\ , \ \ }
t^2 + \tau^2 - 1/2  \hbox{\ , \ \ }
u^2 + \mu^2 - 1/2  \hbox{\ , \ \ }
v^2 + \nu^2 - 1/2 }\ . 
\end{eqnarray*}
Note that this places the candidates for MUB vector entries upon a~$2d^2$-torus.

(II) Secondly, each basis must be orthonormal. 
By~(I) they are unit vectors, so it remains to check orthogonality. 
There are~$\binom{d}{2}$ comparisons to be made, each with real and imaginary parts, yielding a total of~$d(d-1)$ extra homogeneous quadratic equations per basis; in our case: 
\begin{eqnarray*}
\small{
e g + \epsilon \gamma + f h + \phi \chi \hbox{\ , \ \ }
-e \gamma + \epsilon g - f \chi + \phi h \hbox{\ , \ \ } } \\
\small{
s u + \sigma \mu + t v + \tau \nu \hbox{\ , \ \ }
-s \mu + \sigma u - t \nu + \tau v }.
\end{eqnarray*}
(III) Finally the actual MUB-ness comparisons require a recursive structure of new equations which --- in taking norms of sums of Hermitian products --- gives~$d^2$ more inhomogeneous quartic equations per (unordered) \emph{pair of bases}: 
\small
\begin{eqnarray*}
\small{
e^2 s^2 + e^2 \sigma^2 + 2 e f s t + 2 e f \sigma \tau + 2 e \phi s \tau - 2 e \phi \sigma t + \epsilon^2 s^2 + \epsilon^2 \sigma^2 - 2 \epsilon f s \tau}&&\\ \small{+ 2 \epsilon f \sigma t + 2 \epsilon \phi s t + 2 \epsilon \phi \sigma \tau + f^2 t^2 + f^2 \tau^2 + \phi^2 t^2 + \phi^2 \tau^2 - 2 \ ,} \\
\small{e^2 u^2 + e^2 \mu^2 + 2 e f u v + 2 e f \mu \nu + 2 e \phi u \nu - 2 e \phi \mu v + \epsilon^2 u^2 + \epsilon^2 \mu^2 - 2 \epsilon f u \nu + 2 \epsilon f \mu v }&&\\ \small{+ 2 \epsilon \phi u v + 2 \epsilon \phi \mu \nu + f^2 v^2 + f^2 \nu^2 + \phi^2 v^2 + \phi^2 \nu^2 - 2 \ ,} \\
\small{ g^2 s^2 + g^2 \sigma^2 + 2 g h s t + 2 g h \sigma \tau + 2 g \chi s \tau - 2 g \chi \sigma t + \gamma^2 s^2 + \gamma^2 \sigma^2 - 2 \gamma h s \tau}&&\\ \small{ + 2 \gamma h \sigma t + 2 \gamma \chi s t + 2 \gamma \chi \sigma \tau + h^2 t^2 + h^2 \tau^2 + \chi^2 t^2 + \chi^2 \tau^2 - 2 \ ,} \\
\small{ g^2 u^2 + g^2 \mu^2 + 2 g h u v + 2 g h \mu \nu + 2 g \chi u \nu - 2 g \chi \mu v + \gamma^2 u^2 + \gamma^2 \mu^2 - 2 \gamma h u \nu + 2 \gamma h \mu v}&&\\ \small{ + 2 \gamma \chi u v + 2 \gamma \chi \mu \nu + h^2 v^2 + h^2 \nu^2 + \chi^2 v^2 + \chi^2 \nu^2 - 2\ . }
\end{eqnarray*}
\normalsize

\end{document}